\documentclass{amsart}[11pt]
\usepackage[left=40mm,right=40mm,top=2cm,bottom=2cm]{geometry}
\usepackage{url}
\usepackage[colorlinks=true,citecolor=blue,linkcolor=blue,urlcolor=blue,bookmarks,bookmarksopen,bookmarksdepth=2,backref=page,breaklinks]{hyperref}
\usepackage{float}
\usepackage{enumerate}  
\usepackage{xcolor}
\usepackage{colortbl}
\usepackage{multirow}
\usepackage{multicol}
\usepackage{wrapfig}
\long\def\gloops#1{}
\usepackage{environ}
\NewEnviron{noproof}{}

\newtheorem{example}{Example}
\newtheorem{scolie}{Scolie}

\newtheorem{lemma}{Lemma}

\def\f2{{\mathbb F}_2}
\def\fd(#1){{\mathbb F}^{#1}_2}
\def\ffd(#1){{\mathbb F}_{2^{#1}}}
\def\Comp(#1){{\rm comp}(#1)}
\def\comp(#1){{\langle #1 \rangle}}
\def\abs(#1){\vert #1\vert}

\def\tfr(#1,#2){ { {\widehat{#1}(#2)} } }
\def\TFR(#1,#2,#3){ { {\widehat{#1}(#2,#3)} } }
\def\spec(#1){  {\rm spec}{\,(#1)}}  
\def\moment(#1,#2){ { \kappa_{#1}(#2)} }
\def\triv(#1,#2){ { AROUND{\rm triv}_{#2}(#1)} }
\def\inv{  \mathfrak J } 
\def\invwalsh{  \mathfrak w }
\def\invec{{\mathfrak J}'}
\def\QQ{ \mathcal{Q}}
\def\boole(#1){ B(#1) }
\def\nl(#1){  {\rm nl}{\,(#1)}} 
\def\l(#1){  {\rm l}{\,(#1)}} 
\def\CCZ{\textsc{ccz }}
\def\CCZ-{\textsc{ccz}-}
\def\APN{\textsc{apn} }
\def\EA{\textsc{ea} }
\def\MNBC{\textsc{mnbc} }
\def\EA-{\textsc{ea}-}
\def\ext(#1){{\rm Ext}(#1)}
\def\bleu#1{\textcolor{black}{#1}}
\def\rouge#1{\textcolor{black}{#1}}
%%\title{Observations on APN functions}
%\title{On the Dublin Permutation}
\title{Spectral Moment of Order Four and the Uniqueness of the CCZ class of Dublin APN Permutation}
\author{Valérie Gillot}

\author{Philippe Langevin}
\author{Abdoulaye Lo}
%\address{Imath, university of Toulon}
%\email{valerie.gillot@univ-tln.fr} 
%\email{philippe.langevin@univ-tln.fr}
% \date{\today}

\begin{document}

\maketitle

\section{Introduction}
At Finite Field conference FQ9 in Dublin,  K. A. Browning, J. F. Dillon, M. T. McQuistan, and A. J. Wolfe 
offered to us very good news : the discovery of an \APN permutation in
dimension six \cite{DILLON}, we will refer to  this as "Dublin permutation", or
\CCZ-class of Dublin. Since this announcement of their "update", 
numerous attempts have been made to find a new \APN permutation in even dimension, 
\bleu{unfortunately, no new such permutations have been found!}
\textcolor{black}{Thanks to the article 
\cite{EDEL},
\rouge{we now have a list of 14 \CCZ-classes of 6-bit \APN functions}
with representatives of degree less than or equal to 3. The two numerical searches in dimension 6, the classification 
of cubic \APN functions \cite{LANGEVIN} and the search 
\rouge{for \APN permutations} \cite{LEANDER} suggest that there is no other \CCZ-class} \rouge{in dimension 6}. However, 
the lack of theoretical results leaves open the possibility of unknown sporadic classes,
hidden within the complexity of the combinatorial of the problem. In order to eventually 
uncover a novelty in dimension 6, one must search among functions of degree greater 
than or equal to 4, probably equal. In this talk, we address the question of the existence 
of an \APN function of degree 4 having a special structure based on
observations of the decomposition of the 14 known \CCZ-classes \cite{CALDERINI}.
\bleu{More precisely, 12 of the 14 known \CCZ-classes contain at least one 
class composed of vectorial functions such that the set of fourth-order spectral moments 
of the components has exactly two distinct values.}
We present a procedure 
to classify 6 bits \APN quartics sharing this regularity. To achieve this, we introduce a new 
algorithm to test the existence of an APN extension of a given $(m,m-2)$-function. Our talk also provides specific results on APN-functions based on the 
classification of 6-bits Boolean functions. The  technical details are developped 
in the following sections.  

\section{Boolean and vectorial function}
Let $\f2$ be the finite field of order $2$. Let $m$ be a positive integer. 
We denote $\boole(m)$ the set of Boolean functions $f \colon \fd(m) \rightarrow\f2$. Every Boolean 
function has a unique algebraic reduced representation:
\begin{equation}\label{ANF}
f(x_1, x_2, \ldots, x_m ) = f(x) = \sum_{S\subseteq \{1,2,\ldots, m\}} a_S X_S,
\quad a_S\in\f2, \ { X_S = \prod_{s\in S} x_s}.
\end{equation}

The \textsl{degree} of $f$ is the maximal cardinality of $S$ with  $a_S=1$ in
the algebraic form. In this paper, we conventionally fix the degree
of the null function to zero. To classify Boolean functions, one introduces two definitions of equivalency, for $f,g \in \boole(m)$,  $f$ and $g$ are \textsl{affine equivalent} (equivalent) if there exist an affine permutation $A$ of $\fd(m)$ such that \rouge{$(f \circ A)(x) = g(x)$ or $g(x)+1$} ; $f$ and $g$ are \textsl{extended affine equivalent} (\EA-equivalent) if there exist an affine permutation $A$ of $\fd(m)$ and an affine Boolean function $\ell$  such that $g(x)=( f \circ A)(x)+\ell(x)$. The \textsl{Walsh coefficient} of $f\in \boole(m)$ at $a \in \fd(m)$ is 
$$\tfr (f,a) =\sum_{x \in \fd(m)} (-1)^{f(x)+a.x},$$ the multiset of 
Walsh coefficients is called the \textsl{Walsh spectrum}. Let $q=2^m$, the Walsh 
coefficients satisfy Parseval's identity :
\begin{equation}
\label{PARSEVAL}
 \sum_{a\in \fd(m)} {\tfr(f,a)}^2 = q^2
\end{equation}

The \textsl{spectral moment of order} $r$ is the integer :
\begin{equation}\label{MOMENT}
 \moment(r,f)= 
\frac 1{q^2}\sum_{ a \in \fd(m)  } \tfr( f , a )^r.
\end{equation}
It is an \EA-invariant \textcolor{black}{when $r$ is even} and we normalize the 4th-order spectral moment~:
\begin{equation}
\label{KAPPA}
 \kappa(f)= \frac 1q \moment(4, f)=\frac{1}{q^3}\sum_{a\in \fd(m)} {\tfr(f,a)}^4 
\end{equation}
The multiset 
$\invwalsh(f) := \{\!\!\{ \abs( {\tfr(f,a)} ) \mid a \in \fd(m) \}\!\!\}$
of absolute value of Walsh coefficients of $f$, is an \rouge{\EA-invariant}. \textcolor{black}{
 The set $\QQ$  of
quadratic forms of rank 2 \rouge{is the set} of homogeneous polynomial of degree 2 that are \EA-equivalent to $x_1x_2$. The set $\QQ$ is invariant under the action of the group of affine permutations. Based on $\QQ$ and  the previous invariant $\mathfrak w$, we define an other \EA-invariant $\inv$ on $\boole(m)$, the multiset : $ \inv  (f) = \{\!\!\{ \invwalsh( f + g ) \mid g\in \QQ \}\!\!\}$}.
A \textsl{bent} function \bleu{is} a Boolean function $f$ whose Walsh transform has constant absolute value. Bent functions exist only for even $m$, and satisfy :
\begin{equation}\label{BENT}
\forall a\in \fd(m),\quad\abs({\tfr(f,a)})=\sqrt{q}=2^{m/2} \Longleftrightarrow \kappa(f)=1
\end{equation}
Two complementary notions are defined from the Walsh coefficients of a Boolean function $f$ the \textsl{linearity} $\l(f)$ and
the \textsl{non-linearity}  $\nl(f)$ with their bound :
\begin{equation*}
   \l(f) := \max_{a\in\fd(m)} \abs({\tfr(f,a)})\geq 2^{m/2}\qquad
   \nl(f):= 2^{m-1} - \frac{1}{2}\max_{a\in\fd(m)} \abs({\tfr(f,a)})\leq 2^{m-1} - 2^{m/2-1}
\end{equation*}

Bent functions have a maximal non-linearity and achieve the upper bound of non-linearity $2^{m-1}-2^{m/2-1}$. The \textsl{auto-correlation} of a Boolean function $f$ is defined for $t\in \fd(m)$ by :
\begin{equation}\label{CORR}
f\times f (t)=\sum_{x+y=t} (-1)^{f(x)+f(y)}= \frac 1 q \sum_{a\in \fd(m)} \tfr(f,a)^2 (-1)^{a.t}
\end{equation}
A vectorial $(m,n)$-function is a mapping from $\fd(m)$ into $\fd(n)$, it is 
defined by $n$ \textsl{coordinate Boolean functions}  $f_i=e_i . F(x)$
such that $F(x) = \big(f_1(x), f_2(x), \ldots, f_n(x) \big)$ with $(e_i)_{1\leq i\leq n}$ \rouge{being} canonical basis of $\fd(n)$. For any $b\in\fd(n)$,
the Boolean function $x\mapsto F_b(x) = b . F(x)$ is a \textsl{component} of $F$,
the space $\comp(F)$  of the components is generated by the coordinates of $F$. 
The degree of a vectorial function is the maximum among the degrees of its Boolean components. 
Most concepts introduced earlier for Boolean functions can be extended to vectorial functions.
We define equivalency of vectorial functions, for $F$ and $G$ two $(m,n)$-functions, $F$ and $G$ are \textsl{affine equivalent} (equivalent) if there exist an affine $(m,m)$-permutation $A$, an affine $(n,n)$-permutation $B$ such that $G(x)=(B\circ F \circ A) (x)$ ;
$F$ and $G$ are \textsl{extended affine equivalent} (\EA-equivalent) if there exist an affine $(m,m)$-permutation $A$, an affine $(n,n)$-permutation $B$ and an affine $(m,n)$-function $C$ such that $G(x)=(B\circ F \circ A) (x)+C(x)$ ;
 $F$ and $G$ are \textsl{\CCZ-equivalent} if there exists an affine permutation $\mathcal A$ on $\fd(m)\times \fd(n)$ such that $\mathcal A (\Gamma(F))=\Gamma(G)$ where $\Gamma(F)=\{ (x, F(x)) \mid x \in \fd(m)\}$ (resp. $\Gamma(G)$) is the graph of $F$ (resp. $G$).

\begin{lemma}
\label{INVARIANT}
The multiset
$\invec (F) = \{\!\!\{ \inv( f) \mid f \in \comp(F) \}\!\!\}
$ is an \EA-invariant.
\end{lemma}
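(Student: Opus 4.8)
The plan is to use the fact, established just above, that $\inv$ is an \EA-invariant on $\boole(m)$, and to follow precisely what an \EA-equivalence does to the family of components $(F_b)_{b\in\fd(n)}$. It is convenient to write $\invec(F)$ more explicitly as the multiset $\{\!\!\{ \inv(F_b)\mid b\in\fd(n) \}\!\!\}$, obtained by letting $b$ range over all of $\fd(n)$ (so that a component arising for several values of $b$ is counted with the corresponding multiplicity). With this reading the lemma will follow as soon as, for a given \EA-equivalence $G=B\circ F\circ A+C$, we produce a permutation $b\mapsto b'$ of $\fd(n)$ such that $G_b$ is \EA-equivalent to $F_{b'}$ for every $b$: the \EA-invariance of $\inv$ then gives $\inv(G_b)=\inv(F_{b'})$, and reindexing the multiset by this permutation carries $\invec(G)$ onto $\invec(F)$.

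To build the permutation, I would observe how each of the three ingredients of $G=B\circ F\circ A+C$ acts on the component at a fixed index $b$: right composition by $A$ turns $F_b$ into $F_b\circ A$; the affine part $C$ adds the affine Boolean function $b\cdot C(x)$; and the affine permutation $B$, written $B(y)=L(y)+c$ with $L$ a linear permutation of $\fd(n)$ and $c\in\fd(n)$, moves the index to $L^{\top}b$ and adds the constant $b\cdot c$, where $L^{\top}$ is the adjoint of $L$ (characterised by $b\cdot L(y)=(L^{\top}b)\cdot y$). Combining these, a direct computation gives, for every $b\in\fd(n)$,
\[
G_b(x)=b\cdot\big(B(F(A(x)))+C(x)\big)=(L^{\top}b)\cdot F(A(x))+\big(b\cdot c+b\cdot C(x)\big)=(F_{L^{\top}b}\circ A)(x)+\ell_b(x),
\]
where $\ell_b(x):=b\cdot c+b\cdot C(x)$ is an affine Boolean function because $C$ is affine. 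Thus $G_b$ is obtained from $F_{L^{\top}b}$ by precomposition with the affine permutation $A$ and addition of an affine function, i.e.\ by an \EA-equivalence of Boolean functions, so $\inv(G_b)=\inv(F_{L^{\top}b})$. Since $L$ is invertible, so is $L^{\top}$, hence $b\mapsto L^{\top}b$ is the desired permutation of $\fd(n)$; reindexing $\invec(G)=\{\!\!\{ \inv(G_b)\mid b\in\fd(n) \}\!\!\}$ along it yields $\{\!\!\{ \inv(F_{b})\mid b\in\fd(n) \}\!\!\}=\invec(F)$.

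I do not expect any real obstacle beyond the bookkeeping of the multiset, which is the single point deserving care. The assignment $b\mapsto F_b$ need not be injective, and adding the affine function $C$ can render the coordinates of $G$ linearly dependent even when those of $F$ are independent, so that the dimension of the component space $\comp(G)$ may differ from that of $\comp(F)$; consequently the invariant must be handled as a multiset indexed by $\fd(n)$, not as a multiset indexed by the distinct elements of the component space. Once this is settled, the entire argument reduces to the displayed identity together with the elementary facts that the adjoint of an invertible linear map is invertible and that precomposition with an affine permutation plus addition of an affine function is exactly an \EA-equivalence of Boolean functions.
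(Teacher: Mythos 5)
Your argument is correct, and it is the natural one: writing the equivalence as $G=B\circ F\circ A+C$ with $B(y)=L(y)+c$, the identity $G_b=(F_{L^{\top}b}\circ A)+\ell_b$ with $\ell_b$ affine shows each component of $G$ is \EA-equivalent (as a Boolean function) to the component of $F$ indexed by $L^{\top}b$, and since $b\mapsto L^{\top}b$ is a bijection of $\fd(n)$ the multisets of values of $\inv$ match. The paper states Lemma~\ref{INVARIANT} without any proof, so there is no argument of the authors to compare against; yours supplies exactly the missing computation. Your closing remark on bookkeeping is moreover a genuine precision rather than pedantry: with the paper's literal indexing over the set $\comp(F)$ (each element of the component space counted once), the multiset need not be preserved, since adding the affine part $C$ can change the dimension of the component space when $F$ has affine (in particular zero) nonzero-indexed components --- e.g.\ $F=(f,f)$ versus $G=F+(0,\ell)$ with $\ell$ affine gives component spaces of sizes $2$ and $4$. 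Indexing by $b\in\fd(n)$, as you do, is the reading under which the statement is true in full generality; in the paper's application (components of degree $\geq 2$, hence no affine components, and independent coordinates) the two readings coincide, so your version also justifies the use made of the invariant later in the search procedure.
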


The Walsh coefficient of a $(m,n)$-Function $F$ at $(a,b) \in \fd(m)\times \fd(n)$ is Walsh coefficient of its component $F_b$ :
$$\TFR (F,a,b) =\tfr (F_b,a)=\sum_{x \in \fd(m)} (-1)^{F_b(x)+a.x},$$ 
the linearity and non-linearity of a $(m,n)$-function $F$ are respectively the maximum of linearity among its components and the minimum non-linearity among its components :
$$\l(F):=\max_{a\in \fd(m), b\in\fd(n)\setminus \{0\}} \abs({\tfr(F_b,a)})\qquad \nl(F):= 2^{m-1}- \frac 1 2 \max_{a\in \fd(m), b\in\fd(n)\setminus \{0\}} \abs({\tfr(F_b,a)})$$
A $(m,n)$-function is \textsl{bent} if all its non-zero components are bent. It exists iff  $m$ is even and $n\leq m/2$. 
For $m=2k$ and $n>k$, an $(m,n)$-function $F$ is called $(m,n)$-\MNBC function see \cite{SACHAMNBC}, if it has the maximum number of bent components $2^n-2^{n-k}$. We consider the system of two equations and $r$ variables in $\fd(m)$:

\begin{equation}
                        \label{SIGMA}
x_1    + x_2    + \cdots+ x_r   =  0, \quad\text{and}\quad F(x_1) + F(x_2) + \cdots+ F(x_r) = 0.
\end{equation}      

We propose to denote by $N_r(F)$ the number of solutions,
and $T_r(F)$ the number of solutions where $x_1$, $x_2$, \ldots,
$x_r$ are not all distincts.
Let us denote
\begin{equation}\label{Q}
    Q_r(F):= \frac 1 {r!} (N_r(F) - T_r(F))
\end{equation} 

\textcolor{black}{
Using the character sum counting method, the number of solutions $N_r(F)$ 
of the above system is 
\begin{align*}
N_r( F )&=\frac{1}{2^{n+m}}\sum_{x_1, x_2, \ldots, x_r} \sum_{b \in \mathbb{F}_2^n} (-1)^{b.(F(x_1) + F(x_2) + \cdots + F(x_r))}   \sum_{a \in \mathbb{F}_2^m} (-1)^{a.(x_1 + x_2 + \cdots + x_r)} \\
&=\frac{1}{2^{n+m}} \sum_{b \in \mathbb{F}_2^n}  \sum_{a \in \mathbb{F}_2^m} \tfr(F_b,a)^r\\
\end{align*} 
In term of moments of order $r$ by 
of the components of $F$~:
\begin{equation}
\label{NUMBER}
N_r( F ) =2^{m-n} \sum_{f\in\comp(F) }  \moment(r, f ).
\end{equation}
}

When we observe the Boolean components space $\comp(F)$ of a vectorial function $F$, we are interested on the one hand in the set of their \EA-classes $C_F:=\{ \text{\EA-classes}(f) \mid  f \in \comp(F)\}$ and on the other hand in the set of all normalised 4th-order spectral moments $K_F:=\{ \kappa(f) \mid f \in \comp(F)\}$. For these two sets, we also study their cardinality and their distribution of values. Note that $\sharp K_F\leq \sharp C_F$.
We are particularly interested in vectorial functions such that $\kappa(f)$ take few  values. A vectorial function $F$ has $k$ levels of 4-th order spectral moments if the cardinality of $K_F$ is $k$.
  
\begin{example}
If $m$ is odd then all the non zero components of the power function $x^3$ in $\ffd(m)$ are \EA-equivalents, $\sharp C_F=1$, and thus $\sharp K_F=1$. 
\end{example}

\section{\APN and counting function}

Let \rouge{$F$ be an} $(m,n)$-function. For $0\ne u\in\fd(m)$, $v\in \fd(n)$, we denote $N_F(u,v)$ the number of solutions in $\fd(m)$ of the equation $F(x+u)+F(x)=v$. Note that if $x$ is a solution then $x+u$ is also a solution. Thus, $N_F(u,v)$ is even. 
\begin{equation}
 N_F(u,v)=\frac 1 {2^{m+n}} \sum_{a\in \fd(m),b\in\fd(n)} \tfr (F_b,a)^2 (-1)^{a.u} (-1)^{b.v} 
 = \frac 1 {2^n} \sum_{b\in\fd(n)} F_b\times F_b (u) (-1)^{b.v}.
\end{equation}
The \textsl{differential uniformity} of a $(m,n)$-function $F$ is  $\Delta_F:= \max\limits_{u\in\fd(m)\setminus \{0\}, v\in \fd(n)} N_F(u,v).$
A $(m,m)$-function $F$ is \textsl{almost perfect non linear} (\APN) iff it satisfies one of the following equivalent properties :
\begin{enumerate}[(i)]
\item The differential uniformity of $F$ is $\Delta_F=2$.
\item For all 2-flat $\{x,y,z,t\}\subseteq \fd(m)$,  $F(x) + F(y) + F(z) + F(t) \ne 0$.
\item  $N_4(F)  = T_4(F) = 3q^2 - 2q.$\quad (iv) $\sum\limits_{0\ne f\in \comp(F)} \kappa(f) = 2(q-1)$
\end{enumerate}

\begin{lemma} If $F$ is \APN in even dimension then \textcolor{black}{$\sharp K_F\geq 2$}.
\end{lemma}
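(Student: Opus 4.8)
The plan is to argue by contradiction, reducing the statement to an elementary congruence modulo $3$.

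Suppose $\sharp K_F=1$. Then all $q-1$ non-zero components $f\in\comp(F)$ share a single value $\kappa(f)=\kappa_0$. Summing this common value over the non-zero components and invoking characterization (iv) of the \APN property, we get $(q-1)\kappa_0=\sum_{0\ne f\in\comp(F)}\kappa(f)=2(q-1)$, hence $\kappa_0=2$. Thus the lemma will follow once we show that \emph{no} Boolean function on $\fd(m)$ has $\kappa(f)=2$ when $m$ is even.

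For this I would use that $t^4\equiv t^2\pmod 3$ for every integer $t$. Applying this termwise to the Walsh coefficients of $f$ and then using Parseval's identity~\eqref{PARSEVAL} yields
\begin{equation*}
\sum_{a\in\fd(m)}\tfr(f,a)^4\;\equiv\;\sum_{a\in\fd(m)}\tfr(f,a)^2\;=\;q^2\pmod 3 .
\end{equation*}
When $m$ is even, $q=2^m\equiv 1\pmod 3$, so the right-hand side is $\equiv 1\pmod 3$; on the other hand, by the definition~\eqref{KAPPA} the equality $\kappa(f)=2$ means $\sum_{a}\tfr(f,a)^4=2q^3\equiv 2\pmod 3$. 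Since $2\not\equiv 1\pmod 3$, this is impossible, contradicting the existence of the non-zero components of $F$. Hence $\sharp K_F\ge 2$.

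No step here is genuinely delicate: the only thing to keep in mind is that characterization (iv) must be applied to the \emph{whole} set of non-zero components, so that the forced common value is exactly $2$. It is worth noting that the congruence obstruction disappears in odd dimension, where $q\equiv 2\pmod 3$ makes $2q^3\equiv q^2\pmod 3$ hold — in agreement with the fact that the semi-bent components of the Gold permutation $x^3$ in odd dimension all satisfy $\kappa=2$, so that $\sharp K_F=1$ there.
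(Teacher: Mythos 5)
Your proof is correct and follows essentially the same route as the paper's: use characterization (iv) to force the common value $\kappa_0=2$, then derive the contradiction from $t^4\equiv t^2\pmod 3$ together with Parseval's identity, which forces $q\equiv 2\pmod 3$ and hence $m$ odd. Your write-up is in fact slightly more explicit than the paper's (which states the conclusion as ``$\kappa(f)\equiv q\pmod 3$, that implies $m$ odd''), and the closing remark about $x^3$ in odd dimension matches the paper's own example.
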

\begin{proof}
If $f$ is non zero component of $F$ with $\sharp K_F = 1$, and (iv) implies $\kappa(f)= 2$. \textcolor{black}{By little Fermat's Theorem $q^3\equiv q \mod 3$ and $\tfr(f,a)^4\equiv \tfr(f,a)^2 \mod 3$, applying Parseval's identity, we obtain $\kappa(f)\equiv q \mod 3$, that implies $m$ odd.}
\end{proof}

\textcolor{black}{The non-existence in even dimension of \rouge{\APN functions} with a single spectral moment of order 4 naturally leads us in the next section to look for \APN-functions with two spectral moments of order 4. 
We introduce here the terminology of function with 2-spectral levels to designate a vectorial function $F$ such that $\sharp K_F=2$.}

\begin{lemma} If $F$ is \APN in dimension $m$ the number of
trivial solutions are
$$
    T_4(F) = 3q^2 - 2q,\quad  T_6(F) = q + 15 q (q-1) + 15 q (q-1) (q-2).
$$
\end{lemma}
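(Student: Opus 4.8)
The plan is to compute $T_4(F)$ and $T_6(F)$ by directly enumerating the ordered tuples $(x_1,\dots ,x_r)$ solving \eqref{SIGMA} that are not all distinct, sorted by the partition of the index set $\{1,\dots ,r\}$ into blocks of indices carrying equal values. Two elementary observations govern every case. Working in characteristic $2$, a block of \emph{even} size contributes $0$ both to $\sum_i x_i$ and to $\sum_i F(x_i)$; hence only the values sitting on the \emph{odd}-size blocks occur in the two equations, and these values are automatically pairwise distinct, since they label distinct blocks of the equality partition. Secondly, the \APN property in the form (ii) says that the $F$-image of a $2$-flat --- that is, of any four pairwise distinct elements of $\fd(m)$ summing to $0$ --- has nonzero sum.

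For $r=4$ I would go through the partitions of $4$ with a block of size $\ge 2$, namely $4$, $3+1$, $2+2$ and $2+1+1$ (the partition $1+1+1+1$ being the all-distinct case, excluded). After deleting the even blocks, types $4$ and $2+2$ carry no remaining equation, so every such tuple is a solution; types $3+1$ and $2+1+1$ reduce to two distinct odd-block values whose sum is required to vanish, which is impossible in characteristic $2$. Counting ordered tuples then gives $q$ solutions of type $4$ (the constant tuples) and $\binom{4}{2}\binom{q}{2}=3q(q-1)$ of type $2+2$, so $T_4(F)=q+3q(q-1)=3q^2-2q$; equivalently this is the value $T_4(F)=N_4(F)$ already recorded in (iii).

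For $r=6$ the same bookkeeping over the ten partitions of $6$ with a block of size $\ge 2$ leaves exactly the three all-even types $6$, $4+2$ and $2+2+2$ unconstrained, hence solution-bearing; the five types with precisely two odd blocks ($5+1$, $4+1+1$, $3+3$, $3+2+1$, $2+2+1+1$) each force their two odd-block values to coincide and contribute nothing; and the two types with four odd blocks, $3+1+1+1$ and $2+1+1+1+1$, present four pairwise distinct values of $\fd(m)$ summing to $0$, i.e.\ a $2$-flat, whose $F$-sum is exactly the remaining $F$-equation and is nonzero by (ii), so they contribute nothing either. Counting ordered tuples of the three surviving types --- $q$ of type $6$, $\binom{6}{2}q(q-1)=15q(q-1)$ of type $4+2$, and $\tfrac{6!}{2!\,2!\,2!}\binom{q}{3}=15q(q-1)(q-2)$ of type $2+2+2$ --- yields $T_6(F)=q+15q(q-1)+15q(q-1)(q-2)$.

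The only genuine obstacle is making the $r=6$ case split visibly exhaustive and pinpointing where the hypothesis enters, which is precisely at the two four-odd-block types; everything else is a routine multinomial count. A compact way to package the recurring step: a not-all-distinct tuple solves \eqref{SIGMA} iff the set of values carried by its odd-size blocks sums to $0$ and has $F$-image summing to $0$; for $r\le 6$ this set has cardinality $0$, $2$ or $4$, disposed of respectively by ``no condition imposed'', ``a repeated value would be forced'', and ``\APN forbids it''.
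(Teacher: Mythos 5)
Your argument is correct and complete in its essentials. Note that the paper states this lemma without any proof at all, so there is nothing to compare against; your partition-by-equality-pattern count is the natural argument one would supply. The two governing observations are right: even-size blocks cancel in characteristic $2$, so only the (pairwise distinct) values on odd-size blocks enter the two equations of (\ref{SIGMA}); and the parity of $r$ forces an even number of odd blocks, so for $r\le 6$ the only possibilities are $0$, $2$ or $4$ odd blocks (the case of $6$ odd blocks being the excluded all-distinct type $1^6$). Your case analysis is exhaustive, the multinomial counts $q$, $3q(q-1)$ for $r=4$ and $q$, $15q(q-1)$, $15q(q-1)(q-2)$ for $r=6$ are correct, and you correctly locate the only place where the \APN hypothesis (property (ii), applied to the four distinct odd-block values, which form a $2$-flat) is actually used, namely the types $3+1+1+1$ and $2+1+1+1+1$ for $r=6$; it is worth remarking, as your computation shows, that the value $T_4(F)=3q^2-2q$ holds for an arbitrary $(m,m)$-function, \APN or not.
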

The above Lemma can be used to give information on automorphism order.
\begin{wraptable}{l}{70mm}
\begin{tabular}{|c|ccc|cccc|}
\hline
\multirow{2}{*}{\#} & \multicolumn{3}{c|}{degree} & \multicolumn{4}{c|}{ 4th-spectral moment} \\
\cline{2-8}
&2&3&4&1&1.75&2.5&4.0\\
\hline
1  &63 &   &    & 42[1] &     &    & 21[1]\\
2  &    & 7 & 56 &    & 56[1]   &    & 7[\textcolor{green}{1}]\\
5  &1  & 62 &    & 30[2] &  & 24[2] & 9[3]\\
2  &    & 31 & 32 & 12[2] & 32[3] & 12[2] &7[\textcolor{green}{2}]\\
1  &    & 31 & 32 & 12[1] & 32[2] & 12[2] &7[\textcolor{green}{2}]\\
2  &    & 31 & 32 & 12[2] & 32[2] & 12[2] &7[\textcolor{green}{2}]\\
\hline
\end{tabular}
\end{wraptable}
\small
There are 2-spectral levels APN functions. The \CCZ-class 
of Dublin permutation is divided into 13 \EA-classes \cite{CALDERINI}, 
3 of which have 2 spectral levels, see line 1 and 2. 
This table also gives the distribution of the degrees and spectral levels of the components, specifying the number of \EA-classes. For example, the line 2 there is 2 \EA-classes which one that contains 
the Dublin permutation, 7 components are cubics and 56 are quartics. 

\small
\noindent Moreover,  56 components with  spectral moment 1.75 are in [1] \EA-class, 7 components with  spectral moment 4.0 are in [1] \EA-class and form a \textcolor{green}{vector space} of dimension 3.  
\normalsize
For an \APN vectorial function $F$, we introduce the \textsl{counting function} $n_{u}$ for a given $u\in \fd(m)$ defined for $v\in\fd(m)$
by    $$n_{u}(v)=\begin{cases}
1, & \text{if } N_F(u,v)=2;\\
0, & \text{if } N_F(u,v)=0.\\
\end{cases}
$$
This counting function is defined for each $u\in \fd(m)$ and verify for $b\in \fd(m)$:
\begin{equation}\label{LINK}
    \forall b\in\fd(m)\setminus\{0\}, \tfr (n_{u},b)= -F_b\times F_b (u)\quad \text{and}\quad \tfr (n_{u},0)=0.
\end{equation}
If all  counting functions $n_u$ are Boolean function of degree at most 1, 
the vectorial function $F$ is called \textsl{crooked}. We apply the relation \ref{LINK} 
to obtain the following observations that are mainly consequences of known classification 
of 6-bits Boolean functions.

\begin{scolie}
In dimension 6, an \APN crooked function is quadratic. 
\end{scolie}

\begin{noproof}
If $F$ is crooked, then all the $n_u$ are affine, thus all the Walsh coefficients $\tfr (n_{u},b)\equiv 0 \mod 64$, i.e. : $-F_b\times F_b (u) \equiv 0 \mod 64$. One can check from the classification of Boolean 6-bit functions that all function 
with such a correlation must be quadratics.
\end{noproof}
\begin{scolie}
All the counting functions of 6-bit \APN function of degree 6 are quintic. 
\end{scolie}
\begin{noproof}
On one hand, all the correlation coefficients of a Boolean 
function of degree 6 are congruent to 4 modulo 8. On another
hand, all the Walsh coefficients  of balanced function
of degree 4 are to zero modulo 8.
\end{noproof}
\begin{scolie}
In dimension 6, if $F$ is a \MNBC function then it is not \APN. 
\end{scolie}
\begin{noproof}
One can deduce that from \cite{SACHAMNBC}. We propose an alternative proof
from the classification of 6-bit functions. Let $F$ be a putative \MNBC and \APN in 6 variables. 
Any of its counting  function has at 7 non zero
Walsh coefficients. A simple observation of 150356 
class of Boolean function in $\boole(6)$ shows
theses  counting functions are affine or almostlinear
with Walsh coefficient takes values in
$\{0, 32, 64\}$. The auto-correlation 
values of components  share the same set, 
and again, a direct observation shows they
must have degree at most 3.
\end{noproof}
\section{Function with 2-spectral levels}

We observe the existence  of two spectral level function in each 
of the 14 known \APN \CCZ-classes in dimension 6, and we decided to search
for other examples by extension process. A vectorial \APN function $F$ is with \textsl{2-spectral levels} if the normalized 4th-order spectral moments of its components take 2 distinct values $\alpha$ and $\beta$. In this case, \begin{equation}\label{AB}
        \alpha A + \beta B = 2(q-1), \quad A + B = q-1 ;
\end{equation} where $A$ (resp. $B$) is the number of components $f$ of $F$ such that $\kappa(f)=\alpha$ (resp. $\kappa(f)=\beta$). We suppose that $\alpha < \beta$ and we say $F$ is a function of type $(\alpha,\beta)$.
Using the classification of Boolean functions,  among 293 values of $\kappa$, we found 62 possible pairs satisfying \ref{AB}~, involving function of degree less or equal to 4 :
$$
\hbox{
\tiny
\begin{tabular}{|r|c|c|c|c|r|c|c|c|c|}
\hline
$\alpha$  &A&$\deg$&$\sharp$& $\beta$ &B&$\deg$&$\sharp$\\
\hline
1.0&42 &23. &{4} &4.0&21 &234 &{86}\\
1.0&60 &23. &{4} &22.0& 3 &.3. &{1}\\
1.0&56 &23. &{4} &10.0& 7 &.3. &{1}\\
1.0&49 &23. &{4} &5.50&14 &.34 &{29}\\
1.0&21 &23. &{4} &2.50&42 &.34 &{216}\\
1.0&57 &23. &{4} &11.50& 6 &.34 &{5}\\
1.0&35 &23. &{4} &3.250&28 &.34 &{191}\\
1.0&15 &23. &{4} &2.3125 &48 &.34 &{214}\\
1.0&51 &23. &{4} &6.250&12 &..4 &{13}\\
1.0&47 &23. &{4} &4.9375 &16 &..4 &{37}\\
1.0&39 &23. &{4} &3.6250&24 &..4 &{67}\\
1.0& 7 &23. &{4} &2.1250&56 &..4 &{49}\\
1.0&55 &23. &{4} &8.8750& 8 &..4 &{2}\\
\hline
\end{tabular}
\begin{tabular}{|r|c|c|c|c|r|c|c|c|c|}
\hline
$\alpha$  &A&$\deg$&$\sharp$& $\beta$ &B&$\deg$&$\sharp$\\
\hline
\rowcolor[gray]{.8} 1.750&56 &..4 &{8} &4.0& 7 &234 &{86}\\
1.750&42 &..4 &{8} &2.50&21 &.34 &{216}\\
1.750&60 &..4 &{8} &7.0& 3 &.34 &{3}\\
1.750&51 &..4 &{8} &3.0625 &12 &.34 &{321}\\
1.750&35 &..4 &{8} &2.3125 &28 &.34 &{214}\\
1.750&59 &..4 &{8} &5.6875 & 4 &..4 &{25}\\
1.750&21 &..4 &{8} &2.1250&42 &..4 &{49}\\
1.750&49 &..4 &{8} &2.8750&14 &..4 &{119}\\
1.750&57 &..4 &{8} &4.3750& 6 &..4 &{34}\\
1.9375 &56 &..4 &{54} &2.50& 7 &.34 &{216}\\
1.9375 &60 &..4 &{54} &3.250& 3 &.34 &{191}\\
1.9375 &42 &..4 &{54} &2.1250&21 &..4 &{49}\\
1.9375 &62 &..4 &{54} &5.8750& 1 &..4 &{19}\\
\hline
\end{tabular}
}
$$
If we restrict our attention to the case where the set of components such that $\kappa(f)=\alpha$ or $\kappa(f)=\beta$ forms a vector space, thus $A$ or $B$ is a power of 2 minus 1, 
we obtain 6 possible pairs listed in the Table \ref{PAIR}.
The Table describes the structure of a potential vectorial function of type $(\alpha,\beta)$.
For example, the 4-th line corresponds  to the pair $(1.75,4)$, for which we have $A=56$ and $B=7$.
The components corresponding to $\alpha=1.75$ (resp. $\beta=4.0)$ must be chosen from 8 (resp. 86) classes of Boolean functions of degree 4 (resp. 2, 3 and 4). We remark that the permutation obtained in \cite{DILLON} is of type (1.75,4) and corresponds to this line.
\begin{table}[htbp]
\small
\caption{\label{PAIR} Six possible pairs.}
\begin{tabular}{|l|c|c|c|l|c|c|c|c|}
\hline
$\alpha$ &$A$ &degree  &$\sharp$classes &$\beta$  &$B$  &degree &$\sharp$classes\\ 
\hline
\hline
\rowcolor{red} 1.0000 &(56) &23... &{4} & 10.0000 &( 7) &.3... &{1}\\
\rowcolor{red} 1.0000 &(15) &23... &{4} & 2.3125 &(48) &.34.. &{214}\\
 1.0000 &( 7) &23... &{4} & 2.1250 &(56) &..4.. &{49}\\
\rowcolor[gray]{.8}  1.7500 &(56) &..4.. &{8} & 4.0000 &( 7) &234.. &{86}\\
 1.9375 &(56) &..4.. &{54} & 2.5000 &( 7) &.34.. &{216}\\
 1.9375 &(62) &..4.. &{54} & 5.8750 &( 1) &..4.. &{19}\\
 \hline
\end{tabular}
\end{table}
The pair $(1,10)$  corresponding to the first line of the table,
describes an \APN and \MNBC vectorial function of degree less than 3. It follows from
the \cite{SACHAMNBC} it does not exists. 
The second line describes a vector function with a bent-space of dimension 4, that is impossible.
Our objective is \rouge{to find} new vectorial functions of type $(\alpha,\beta)$ in the Table \ref{PAIR}. 
\textcolor{black}{The grey line} covers the case of the Dublin permutation,
and potentially new \CCZ-\APN classes because of the large number 
of possibility in term of classes. \textcolor{black}{
In light of the numerical searches carried out during the last decades, 
the existence of \APN  functions of degree $\geq 5$ seems unlikely.}
In the next section, we decided to
restrict the area of exploration in the space of vectorial quartic functions of type $(1.75,4.0)$, also assuming degree
4 for the $1.75$-components and degree $\leq 3$ for the $4.0$-components.

\section{Numerical investigation}

An extension $G$ of $F$ is obtained by adding some coordinate functions, in
that case $\comp(F)$ becomes a subspace of components space of $G$.

\begin{lemma}\label{DELTA}
If a $(m,n)$-function $F$ has an \APN extension then $\Delta_F \leq 2^{m-n+1}$.
\end{lemma}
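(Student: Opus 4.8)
The plan is to exploit the relation between the difference distribution of $F$ and that of any \APN extension $G$. Since $\comp(F)$ is a subspace of $\comp(G)$, every coordinate function of $F$ is a component of $G$; the first step is to complete the coordinate functions of $F$ to a basis of $\comp(G)$ and to compose $G$ with the induced linear permutation of the codomain $\fd(m)$. This composition does not affect the \APN property, and it lets us assume without loss of generality that $G(x)=\big(F(x),H(x)\big)$ for some $(m,m-n)$-function $H$, the added coordinate functions being collected into $H$.

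Fix $u\in\fd(m)\setminus\{0\}$ and $v\in\fd(n)$. Since
$$
 G(x+u)+G(x)=\big(\,F(x+u)+F(x),\ H(x+u)+H(x)\,\big)
$$
for every $x$, the solution set of $F(x+u)+F(x)=v$ splits into the disjoint pieces indexed by the value $w=H(x+u)+H(x)\in\fd(m-n)$, which gives the identity
$$
 N_F(u,v)=\sum_{w\in\fd(m-n)}N_G\big(u,(v,w)\big).
$$
Now $G$ is \APN, so $N_G\big(u,(v,w)\big)\le\Delta_G=2$ for each of the $2^{m-n}$ values of $w$; hence $N_F(u,v)\le 2\cdot 2^{m-n}=2^{m-n+1}$. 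Taking the maximum over $u\neq 0$ and $v$ yields $\Delta_F\le 2^{m-n+1}$.

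I do not expect a genuine obstacle here: essentially all the content is in the first step, namely that adjoining coordinate functions really does produce an extension of the literal form $(F,H)$, which is exactly the hypothesis $\comp(F)\subseteq\comp(G)$ (with the coordinates of $F$ independent, so that each fibre of the projection $\fd(m)\to\fd(n)$ onto $\comp(F)$ has size $2^{m-n}$). One could instead avoid the change of basis and argue spectrally from $N_F(u,v)=\frac{1}{2^{n+m}}\sum_{a\in\fd(m),\,b\in\fd(n)}\tfr(F_b,a)^2(-1)^{a.u}(-1)^{b.v}$, using that each $F_b$ is a component of $G$; but the counting argument above makes the factor $2^{m-n+1}$ transparent and is the one I would write out.
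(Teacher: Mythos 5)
Your argument is correct: writing the \APN extension as $G=(F,H)$ and using $N_F(u,v)=\sum_{w\in\fd(m-n)}N_G\big(u,(v,w)\big)\le 2\cdot 2^{m-n}$ is exactly the natural counting proof of this bound (the paper states the lemma without proof, and this is the standard argument one would supply). Your preliminary basis-completion step is harmless but unnecessary, since by the paper's definition an extension is literally obtained by adjoining coordinate functions, so $G=(F,H)$ holds from the outset.
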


The vectorial $(m,m-2)$-function $F$ has an \APN extension , if and only if,
for all $(x,y,z,t)\in Q_F$ the system \rouge{of} quadratic equations :
\begin{equation}
\label{SYSTEM}
               g(x) + g(y) + g(z) + g(t) \not= 0 \Leftrightarrow  \big(g(x) + g(y) + g(z) + g(t)\big)^3 = 1.
\end{equation}
is solvable in $\fd(4)$. We remark that   $\big(g(x) + g(y) + g(z) + g(t)\big)^3 $ equal to :
\begin{equation*}
          x^3 + y^3 + z^3 + t^3 + xy(x+y) + xz(x+z)+ xt(x+t) + yz(y+z) + yt(y+t)+ zt(z+t).
\end{equation*}
so we can transform system (\ref{SYSTEM}) in an affine system $S_F$ with $N$ equations and $q(q+1)/2$ unknowns,
 introducing  $q$ Boolean variables $x^3$, and   $q(q-1)/2$ variables $xy(x+y)$.

\begin{lemma}\label{XYZT}
        If the affine system $S_F$ has no solution then $F$ has no \APN extension.
\end{lemma}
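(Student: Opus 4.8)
The plan is to establish the contrapositive: starting from an \APN extension of $F$, I will exhibit an explicit solution of the affine system $S_F$. First I would fix the shape of an extension. Since $F$ is an $(m,m-2)$-function and the \APN property is defined for $(m,m)$-functions, an \APN extension is necessarily an $(m,m)$-function $G=(F,g)$ obtained by adjoining two coordinate functions, i.e. $g$ is an $(m,2)$-function; I identify its value group $\fd(2)$ with the field $\ffd(2)$, so that cubing makes sense. Applying characterization~(ii) of the \APN property to $G$, and observing that the only $2$-flats on which $G$ can sum to zero are those on which $F$ already does — that is, the $2$-flats of $Q_F$ — one gets that $G$ is \APN if and only if $g(x)+g(y)+g(z)+g(t)\neq 0$ in $\ffd(2)$ for every $\{x,y,z,t\}\in Q_F$, equivalently $\bigl(g(x)+g(y)+g(z)+g(t)\bigr)^3=1$, since the nonzero elements of $\ffd(2)$ are exactly the cube roots of unity. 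This is condition~(\ref{SYSTEM}).

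Next I would expand the cube in characteristic two, using the identity displayed just above the lemma, so that $\bigl(g(x)+g(y)+g(z)+g(t)\bigr)^3$ is the sum of the four cubes $g(p)^3$ and the six mixed terms $g(p)g(q)\bigl(g(p)+g(q)\bigr)$, over the points and pairs of points of the flat. The elementary but decisive remark is that every one of these terms already lies in $\f2\subseteq\ffd(2)$: one has $w^3\in\{0,1\}$ for all $w\in\ffd(2)$, and $ww'(w+w')$ equals $1$ exactly when $w$ and $w'$ are distinct and both nonzero, and $0$ otherwise — a short case check over the four elements of $\ffd(2)$. Hence, declaring the $q$ quantities $g(p)^3$ and the $q(q-1)/2$ quantities $g(p)g(q)(g(p)+g(q))$ to be free Boolean unknowns, the condition attached to each flat of $Q_F$ becomes one $\f2$-affine equation in these $q(q+1)/2$ unknowns, and the family of all of them is by construction the system $S_F$.

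The conclusion is then immediate: an \APN extension $G=(F,g)$ yields, through the actual values $g(p)^3$ and $g(p)g(q)(g(p)+g(q))$ read off from $g$, an assignment of the unknowns satisfying every equation of $S_F$, so $S_F$ is solvable; contrapositively, if $S_F$ has no solution then $F$ admits no \APN extension. I do not expect a genuine obstacle in this argument. The one point deserving emphasis is that the linearization is a strict relaxation — the unknowns $g(p)^3$ and $g(p)g(q)(g(p)+g(q))$ are far from independent, all being determined by the single map $g$ — so solvability of $S_F$ is only necessary, not sufficient, for the existence of an \APN extension. This is precisely why the statement is an implication and not an equivalence, and it localizes the entire content of the lemma in the algebraic identity that converts the original inequations into $\f2$-affine equations once those two families of monomials are treated as independent variables.
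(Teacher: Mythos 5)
Your proof is correct and takes essentially the same route as the paper, which gives no separate argument for this lemma beyond the preceding discussion: reduce the APN condition for the extension $G=(F,g)$ to the flats of $Q_F$, rewrite it as $\big(g(x)+g(y)+g(z)+g(t)\big)^3=1$ in $\mathbb{F}_4$, expand the cube, and relax the $\f2$-valued monomials $g(p)^3$ and $g(p)g(q)\big(g(p)+g(q)\big)$ into $q(q+1)/2$ independent Boolean unknowns, so that any actual APN extension yields a solution of $S_F$. Your explicit verifications (that each monomial lies in $\f2$, and that the linearization is only a necessary condition) merely fill in details the paper leaves implicit.
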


We say that an $(m,m-2)$-vectorial function passes the extension test if it satisfies conditions of Lemma \ref{DELTA} and Lemma \ref{XYZT}.  Even it is an hard task, it is possible to use 
the following procedure to "classify" all \APN functions 
of type $(\alpha,\beta)$ that are quartic extensions of 
a $(6,3)$-vectorial cubic. Let $\mathcal E$ be a set of  $(m,n)$-functions. We define
$\ext( \mathcal E )$ as the set of extensions $(F,f)$
having $(\alpha,\beta)$ type that satisfy Lemma \ref{DELTA} and 
"filtered" by invariant $\invec$.  Starting from $\mathcal E_0 :=\{ h\}$
where $\deg(h)=4$ and $\kappa(f)=\alpha$, 
we contruct $\mathcal E_1 = \ext( \mathcal E_0 )$,  $\mathcal E_2 = \ext( \mathcal E_1 )$,
and  $\mathcal E_3 = \ext( \mathcal E_2 )$. We keep the $(6,4)$-function passing
the extension test, and we terminate by a backtracking algorithm to identify
\APN extension, and then 2-level \APN functions.

Applying the procedure using the invariant of Lemma \ref{INVARIANT} 
for the pair (1.75,4), \textcolor{black}{there are 8 quartic classes to initialise the construction process, 4 of which produce \APN functions.} \bleu{The 506880 \APN functions obtained 
after the backtracking phase are not necessarily at 2-spectral levels, but all 16384 functions of type (1.75,4) are ultimately \CCZ-equivalent 
to Dublin permutation.}

\textcolor{black}{
\section{Conclusion}
We introduced the new concept of spectral level to guide our research towards the construction of \APN functions that are strongly structured from a spectral perspective, following the example of the Dublin permutation.
We described a procedure for exploring 2-spectral level \APN functions in dimension 6, and applied it to the search for functions of type (1.75, 4.0), establishing the uniqueness of the \CCZ-class of the Dublin permutation.
This procedure relies on an original extension test, which also validated the backtracking approach. It may be applied to higher dimensions.
The invariant used to limit the combinatorial explosion requires further refinement to allow a complete exploration of all possible pairs of functions at 2-spectral levels.}
\normalsize
\nocite{*}
\bibliographystyle{plain} 
\bibliography{bfa-2025-abstract.bib}
\end{document}